\documentclass[runningheads,a4paper]{llncs}
\usepackage{amssymb}
\setcounter{tocdepth}{3}
\usepackage{graphicx}
\usepackage{color,xcolor}
\usepackage{amsmath,mathrsfs}
\usepackage{hyperref}

\usepackage{url}
\urldef{\mailsa}\path|{alfred.hofmann, ursula.barth, ingrid.haas, frank.holzwarth,|
\urldef{\mailsb}\path|anna.kramer, leonie.kunz, christine.reiss, nicole.sator,|
\urldef{\mailsc}\path|erika.siebert-cole, peter.strasser, lncs}@springer.com|

\makeatletter

\newcommand{\Rmnum}[1]{\expandafter\@slowromancap\romannumeral #1@}
\makeatother

\newcommand{\GF}[2][2]{\mathbb F_{{#1}^{#2}}}
\newcommand{\F}{\mathbb{F}}

\newcommand{\DDT}{\mathrm{DDT}}
\newcommand{\BCT}{\mathrm{BCT}}
\newcommand{\IM}{\mathrm{Im}}
\newcommand{\DU}{\Delta}
\newcommand{\BU}{\beta}
\newcommand{\Tr}{\mathrm{Tr}}
\newcommand{\U}{\mathcal{U}}

\begin{document}
\mainmatter
\title{On the boomerang uniformity of (quadratic) permutations over $\GF n$}
\titlerunning{On the boomerang uniformity of (quadratic) permutations over $\GF n$}

\author{Sihem Mesnager\inst{1}
\and Chunming Tang\inst{2} \and Maosheng Xiong\inst{3}
}

\authorrunning{S. Mesnager, C. Tang and M. Xiong}

\institute{LAGA, Department of Mathematics, Universities of Paris VIII and Paris XIII, CNRS, UMR 7539 and Telecom ParisTech, France\\
\email{smesnager@univ-paris8.fr}\\
\and
School of Mathematics  and Information,
China West Normal University, Nanchong 637002,  China, and  Department of Mathematics, The Hong Kong University
of Science and Technology, Clear Water Bay, Kowloon, Hong Kong\\
\email{tangchunmingmath@163.com}\\
\and
Department of Mathematics, The Hong Kong University of Science and Technology, Clear Water Bay, Kowloon, Hong Kong\\
\email{mamsxiong@ust.hk}}

\maketitle
\makebox[\linewidth]{\today}

\begin{abstract}

At Eurocrypt'18, Cid, Huang, Peyrin, Sasaki, and Song introduced a
  new tool called Boomerang Connectivity Table (BCT) for measuring the
  resistance of a block cipher against the boomerang attack (which is
  an important cryptanalysis technique introduced by Wagner in 1999
  against block ciphers). Next, Boura and Canteaut introduced an
  important parameter (related to the BCT) for cryptographic Sboxes
  called boomerang uniformity.  In this context, we present a brief
  state-of-the-art on the notion of boomerang uniformity of vectorial
  functions (or Sboxes) and provide new results. More specifically,
  we present a slightly different (and more convenient) formulation of the boomerang uniformity and show that the row sum and the column sum of the boomerang connectivity table can be expressed in terms of the zeros of the second-order derivative of the permutation or its inverse. Most importantly, we specialize our study of boomerang uniformity to quadratic permutations in even dimension and generalize the previous results  on quadratic permutation with optimal BCT (optimal means that the maximal value in the Boomerang Connectivity Table equals the lowest known differential uniformity). As a consequence of our general result, we prove that the boomerang uniformity  of the binomial differentially $4$-uniform permutations presented by Bracken, Tan, and Tan equals $4$. This result gives rise to a new family of optimal Sboxes.

  \end{abstract}

{\bf Keywords} Vectorial functions $\cdot$ Block ciphers  $\cdot$ Boomerang uniformity  $\cdot$ Boomerang Connectivity Table $\cdot$ Boomerang attack  $\cdot$ Symmetric cryptography.

\section{Introduction}

Substitution boxes (Sboxes) are fundamental parts of block ciphers.
Being the only source of nonlinearity in these ciphers, they play a
central role in their robustness, by providing confusion.
Mathematically, Sboxes are vectorial (multi-output) Boolean functions,
that is, functions from the vector space $\mathbb{F}_{2}^n$ (of all
binary vectors of length $n$) to the finite field $\mathbb{F}_{2}^r$,
for given positive integers $n$ and $r$. These functions are called
$(n,r)$-functions and include the (single-output) Boolean functions
(which correspond to the case $r =1$). When they are used as S-boxes
in block ciphers, their number $r$ of output bits equals or
approximately equals the number $n$ of input bits. We shall identify
the vector space $\mathbb{F}_2^n$ to the finite field
$\mathbb{F}_{2^n}$ (of order $2^n$).
 A  nice survey on Boolean and vectorial Boolean functions for cryptography can be found in \cite{Cbook1} and \cite{Cbook}, respectively.

In 1999, Wagner \cite{Wagner1999} has introduced the {\em boomerang
  attack} which is an important cryptanalysis technique against block
ciphers involving Sboxes. These attacks can be seen as an extension of
classical differential attacks \cite{BS91}. In fact, they combines two
differentials for the upper part and the lower part of the cipher. The
dependency between these two differentials then highly affects the
complexity of the attack and all its variants (we refer for example to
\cite{BDK01,BDK02,BDD03,BK09,DKS10,KKS01,KHP+12} and the references
therein).

At Eurocrypt 2018, Cid, Huang, Peyrin, Sasaki, and Song
\cite{BOOMERANG-2018} introduced the concept of {\em Boomerang
  Connectivity Table} (BCT for short) of a permutation $F$ and we call
the maximum value in BCT of $F$ the boomerang uniformity of $F$. Such
a notion allows to simplify the complexity analysis, by storing and
unifying the different switching probabilities of the cipher's Sbox in
one table.  Very recently (2019),  Song, Qin and Hu \cite{SongQinHu} have revisited the BCT by proposing a generalized framework of BCT. They applied their new framework to two block ciphers SKINNY and AES which are two typical block ciphers using weak and strong round functions respectively.
In 2018, Boura and Canteaut
\cite{Boura-Canteaut-2018}, have introduced a parameter (related to
the BCT) for cryptographic Sboxes called {\em boomerang uniformity}
and provided a more in-depth analysis of boomerang connectivity tables
by studying more closely differentially 4-uniform Sboxes. They firstly
completely characterized the BCT of all differentially 4-uniform
permutations of 4 bits and then studied these objects for some
cryptographically relevant families of Sboxes, as the inverse function
and quadratic permutations. These two families provide the first
examples of differentially 4-uniform Sboxes optimal against boomerang
attacks for an even number of variables. Later, Li, Qu, Sun and Li
\cite{Li-etal2019} have essentially provided an equivalent definition
to compute the BCT and the boomerang uniformity, provided a
characterization of functions having a fixed boomerang uniformity by
means of the Walsh transform and finally exhibited a class of
differentially 4-uniform permutation for which the boomerang equals
$4$.  The aim of this manuscript is to increase our knowledge on boomerang uniformity of (quadratic) Sboxes in the line of the recent articles \cite{Boura-Canteaut-2018} and \cite{Li-etal2019}. \\

The paper is structured as follows. In Section \ref{Preliminaries}, we introduce the needed definitions related to the  differential uniformity and boomerang uniformity of vectorial functions and briefly discuss these notions.
In Section \ref{boomerang}, we present shortly the state-of-the-art on  boomerang uniformity of Sboxes, we present a slightly different and  more convenient formulation of the boomerang uniformity and show that the row sum and the column sum of the boomerang connectivity table can be expressed in terms of the zeros of the second-order derivative of the permutation or its inverse.
Next, in Section \ref{boomerang_uniformity_quadratic_permutations}, we specialize our study of boomerang uniformity to quadratic permutations in
even dimension and generalize the results presented in \cite{Boura-Canteaut-2018} and \cite{Li-etal2019} on quadratic permutation with optimal BCT (optimal means that the maximal value in the Boomerang Connectivity Table equals the lowest known differential uniformity). Consequently, we recover all the known results and prove that the boomerang uniformity  of the binomial differentially $4$-uniform permutations presented by Bracken, Tan, and Tan equals $4$. This result gives rise to a new family of optimal Sboxes.

\section{Preliminaries}\label{Preliminaries}

In this section, we recall some notations, definitions and results related to the differential properties and boomerang uniformity of functions.

Throughout this article, $\# E$ denotes the cardinality of a finite set $E$. The binary field is denoted by $\mathbb{F}_2$ and the finite field of order $2^n$ (resp. q) is denoted by $\mathbb{F}_{2^n}$ (resp. $\mathbb{F}_{q}$). The multiplicative group ${\Bbb F}^*_{2^n}$ is a cyclic group consisting
of $2^n-1$ elements. The terminology Sbox refer to an $(n,n)$-vectorial function, that is, a function from $\GF{n}$ to itself.

Any function $F$ from $\GF{n}$ to itself admits a (unique) representation as a polynomial over~$\Bbb{F}_{2^n}$
in one variable and of (univariate) degree at most $2^n-1$:\begin{equation}\label{ur}F(x)=\sum_{i=0}^{2^n-1}\delta _i x^i; \quad \delta_i\in {\Bbb F}_{2^n}.\end{equation}

For any $k$, $0\leq k \leq 2^n-1$, the number $w_2(k)$ of the nonzero coefficients $k_s\in\{0,1\}$ in  the binary expansion of $k$ is called the $2$-weight  of $k$.  The algebraic degree of  $F$ is equal to the maximum $2$-weight of the exponents $i$ of the polynomial $F(x)$ such that $\delta _i \not=0$, that  is,  $deg (F)= max_{0\leq i\leq n-1, \delta _i \not=0}w_2(i)$. A function $F$ from $\GF{n}$ to itself is said to be quadratic if $deg(F)=2$.

We use the following terminology: a permutation polynomial over $\mathbb{F}_q$ is a polynomial $F(x)\in \mathbb{F}_{q}[x]$ for which the function $a\mapsto F(a)$ defines a permutation of $\mathbb{F}_q$.

Recall that for any positive integers $k$ and $r$ such that $r|k$, the trace function from $\mathbb{F}_{2^k}$ to $\mathbb{F}_{2^r}$, denoted by $\Tr_{r}^{k}$, is the mapping defined as
\begin{displaymath}
  \Tr_{r}^{k}(x):=\sum_{i=0}^{\frac kr-1}
  x^{2^{ir}}=x+x^{2^r}+x^{2^{2r}}+\cdots+x^{2^{k-r}}.
\end{displaymath}
In particular, the {\em absolute trace} over $\mathbb{F}_2$ of an element $x \in
\mathbb{F}_{2^n}$ equals $\Tr_1^{n}(x)=\sum_{i=0}^{n-1} x^{2^i}$.

\begin{definition}
Given an $(n,n)$-function $F$, the derivative of $F$ with respect to $a\in\GF n$ is the function $D_aF: x \mapsto F(x+a)+F(x)$. For $(a,b)\in\left(\GF n\right)^2$,  the second order derivative of $F$ with respect to $a\in\GF n$ and $b\in\GF n$  is the function $D_a D_bF: x \mapsto F(x+a)+F(x+b)+F(x+a+b)+F(x)$. For $(a,b)\in\left(\GF n\right)^2$,
 the entries of the difference distribution table (DDT) are given by
\begin{displaymath}
  \DDT_F(a,b) = \#\{x\in\GF n\mid D_aF(x) = b\}.
\end{displaymath}
The differential uniformity of $F$ is defined as
\begin{displaymath}
  \DU(F) = \max_{a,b\in\GF n^\star} \DDT_F(a,b).
\end{displaymath}
\end{definition}

Differential uniformity is an important concept in cryptography as it quantifies the degree of security of
a Substitution box used in the cipher with respect to differential attacks. APN (Almost Perfect Nonlinear) functions $F$ are those such that their differential
uniformity equals $2$ (i.e. $\DU(F)=2$).

\begin{definition}
Let $F$ be a permutation of $\GF n$. For
$(a,b)\in\left(\GF n\right)^2$,  we define the entries of the {\em boomerang connectivity table }(BCT) as
\begin{displaymath}
  \BCT_F(a,b) = \#\{x\in\GF n\mid F^{-1}(F(x)+b) + F^{-1}(F(x+a)+b) = a\},
\end{displaymath}
where $F^{-1}$ denotes the compositional inverse of $F$.
The boomerang uniformity of $F$ is defined as
\begin{displaymath}
  \BU(F) = \max_{a,b\in\GF n^\star} \BCT_F(a,b).
\end{displaymath}
\end{definition}


\noindent Observe that $\DDT_F(a,b)$ is equal to $0$ or $2^n$ when $ab=0$. Likewise,
$\BCT_F(a,b)=2^n$ when $ab=0$.

It is well-known that $F$ and $F^{-1}$ have the same differential
uniformity since
\begin{eqnarray*}
  F^{-1}(x) + F^{-1}(x+a) = b
  &\iff& x + a = F(F^{-1}(x) + b) \\
  &\iff& F(F^{-1}(x)) + F(F^{-1}(x) + b) = a,
\end{eqnarray*}
yielding that $\DDT_{F^{-1}}(a,b) = \DDT_F(b,a)$ for any
$(a,b)\in(\GF n^\star)^2$. Not surprising, as shown in \cite[Proposition
2]{Boura-Canteaut-2018}, it is also true for the boomerang uniformity
since
\begin{eqnarray*}
  & & F^{-1}(F(x)+b) + F^{-1}(F(x+a)+b) = a \\
  && \iff F(x) + F(F^{-1}(F(x+a)+b) + a) = b\\
  && \iff F(F^{-1}(F(x+ a)) + a) + F(F^{-1}(F(x+a)+b) + a) = b,
\end{eqnarray*}
that is,
\begin{eqnarray} \label{2:boom} \BCT_F(a,b) = \BCT_{F^{-1}}(b,a), \quad \forall (a, b) \in (\GF n^\star)^2. \end{eqnarray}

\section{On boomerang uniformity of Sboxes}\label{boomerang}

For vectorial Boolean functions, the most useful concepts of
equivalence are the extended affine EA-equivalence and the
CCZ-equivalence. Two $(n,r)$-functions $F$ and $F^{\prime}$ are called
EA-equivalent if there exist affine permutations $L$ from $\GF n$ to
$\GF n$ and $L^{\prime}$ from $\GF r$ to $\GF r$ and an affine
function $L''$ from $\GF n$ to $\GF r$ such that
$F'=L^{\prime} \circ F \circ L + L''$. EA-equivalence is a particular
case of CCZ-equivalence \cite{CCZ98}. Two $(n,r)$-functions $F$ and
$F^{\prime}$ are called CCZ-equivalent if their graphs
$G_F:=\{(x,F(x)),~ x\in \GF n$\} and
$G_F^{\prime}:=\{(x,F^{\prime}(x)),~ x\in \GF n$\} are affine
equivalent, that is, if there exists an affine permutation
$\mathcal{L}$ of $\GF n \times \GF r$ such that
$\mathcal{L}(G_F)=G_F^{\prime}$.

As explained in \cite{Boura-Canteaut-2018}, the multi-set formed by
all values in the BCT is invariant under affine equivalence and
inversion. In other words, the behaviour of the BCT with respect to
these two classes of transformations is exactly the same as the
behaviour of the DDT.  However, while the differential spectrum of a
function is also preserved by the extended affine (EA) equivalence,
this is not the case for the BCT. As EA-equivalence is a special case
of CCZ equivalence, the boomerang uniformity is also not always
preserved under CCZ-equivalence.

In \cite{BOOMERANG-2018}, it was indicated that $\BCT(a,b)$ is greater than or
equal to $\DDT_F(a,b)$ yielding
\begin{theorem}(\cite{BOOMERANG-2018})\label{lower_bound}
  Let $F$ be a permutation of $\GF n$. Then, $\BU(F)\geq\DU(F)$.
\end{theorem}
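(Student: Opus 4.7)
The plan is to establish the stronger pointwise inequality $\BCT_F(a,b) \geq \DDT_F(a,b)$ for every pair $(a,b) \in (\GF n^\star)^2$, and then take the maximum over such pairs to obtain $\BU(F) \geq \DU(F)$. The idea is to exhibit an explicit inclusion between the sets whose cardinalities define these two quantities.

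Fix $(a,b) \in (\GF n^\star)^2$ and let $x \in \GF n$ be any solution of $F(x+a) + F(x) = b$, i.e., one of the $\DDT_F(a,b)$ elements counted by the DDT at $(a,b)$. I would rewrite this as $F(x+a) = F(x) + b$ and compute the two quantities appearing in the boomerang equation. Since $F$ is a permutation, $F^{-1}(F(x)+b) = F^{-1}(F(x+a)) = x+a$. For the other term, $F^{-1}(F(x+a)+b) = F^{-1}(F(x) + 2b)$, and here the crucial step is that we are working in characteristic two so $2b = 0$, which gives $F^{-1}(F(x)+b) + F^{-1}(F(x+a)+b) = (x+a) + x = a$. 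Hence the same element $x$ also contributes to $\BCT_F(a,b)$.

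This shows that the set $\{x \in \GF n : D_aF(x) = b\}$ is contained in the set $\{x \in \GF n : F^{-1}(F(x)+b) + F^{-1}(F(x+a)+b) = a\}$, so $\BCT_F(a,b) \geq \DDT_F(a,b)$. Taking the maximum over $(a,b) \in (\GF n^\star)^2$ on both sides yields $\BU(F) \geq \DU(F)$, which is the claim.

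There is no serious obstacle: the entire argument is a short substitution exploiting characteristic two. The only subtle point worth emphasising is the use of $2b = 0$, which is what makes the pointwise inequality hold and makes the boomerang attack at least as effective as the differential one on every fixed pair $(a,b)$.
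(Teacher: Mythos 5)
Your argument is correct: fixing a solution $x$ of $F(x+a)+F(x)=b$ and substituting into the boomerang equation gives $F^{-1}(F(x)+b)=F^{-1}(F(x+a))=x+a$ and $F^{-1}(F(x+a)+b)=F^{-1}(F(x)+b+b)=x$ since the characteristic is two, so their sum is $a$ and every DDT solution is a BCT solution; the pointwise inequality $\BCT_F(a,b)\geq\DDT_F(a,b)$ and hence $\BU(F)\geq\DU(F)$ follow. This is, however, a different route from the one the paper takes. The paper states the theorem with a citation and then rederives the pointwise inequality as a byproduct of its reformulation in Theorem~\ref{boom-new-form}: writing $\BCT_F(a,b)=\sum_{\gamma\in\GF n^\star}\#\left(\mathcal U_{\gamma,b}^{F}\cap\left(a+\mathcal U_{\gamma,b}^{F}\right)\right)$, isolating the term $\gamma=a$, which equals $\DDT_F(a,b)$ because $\mathcal U_{a,b}^{F}=a+\mathcal U_{a,b}^{F}$, and observing that the remaining terms are nonnegative. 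Your approach is more elementary and self-contained, working directly with the original definition of the BCT via $F^{-1}$ and requiring nothing beyond the substitution and $b+b=0$; it is essentially the original argument of Cid et al. The paper's approach costs the proof of the reformulation first, but in exchange it exhibits exactly which extra terms account for the gap $\BCT_F(a,b)-\DDT_F(a,b)$, namely the contributions from $\gamma\neq a,0$, which is the decomposition the paper then exploits in its analysis of quadratic permutations. Both proofs are valid; yours does not recover that extra structural information but proves the stated inequality with less machinery.
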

It was also proved that
\begin{theorem}(\cite{BOOMERANG-2018})\label{APN}
  Let $F$ be a permutation of $\GF n$. Then, $\DU(F)=2$ if and only if
  $\BU(F)=2$.
\end{theorem}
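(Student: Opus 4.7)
The plan is to prove the two directions separately, with the nontrivial one going via the identity $\BCT_F(a,b) = \DDT_F(a,b)$ for every $(a,b)\in(\GF n^\star)^2$ when $F$ is APN.

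For the forward direction, I would invoke Theorem~\ref{lower_bound}: if $\BU(F)=2$, then $\DU(F)\le 2$. Since the set of solutions of $F(x+a)+F(x)=b$ is stable under $x\mapsto x+a$, every entry $\DDT_F(a,b)$ is even, so $\DU(F)\ge 2$ whenever $F$ is not constant. Combining these gives $\DU(F)=2$. This step is essentially immediate and will take only a line or two.

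For the reverse direction, assume $\DU(F)=2$ and fix $(a,b)\in(\GF n^\star)^2$. The strategy is to rewrite the defining condition of $\BCT_F(a,b)$ as a condition on the second-order derivative, and then use APNness to pin down the solutions. Explicitly, for $x$ contributing to $\BCT_F(a,b)$, I would set $u := F^{-1}(F(x)+b)$, so that the BCT equation forces $u+a = F^{-1}(F(x+a)+b)$, i.e.\ $F(u)=F(x)+b$ and $F(u+a)=F(x+a)+b$. Adding these yields the clean identity
\begin{equation*}
D_a F(u) \;=\; D_a F(x).
\end{equation*}
Because $F$ is APN, the equation $D_aF(u)=D_aF(x)$ has exactly the two solutions $u=x$ and $u=x+a$. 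The first possibility gives $b=0$, which is excluded; the second gives $F(x+a)+F(x)=b$, i.e.\ $D_aF(x)=b$. Conversely, any $x$ with $D_aF(x)=b$ clearly satisfies the BCT condition via $u=x+a$. Hence $\BCT_F(a,b)=\DDT_F(a,b)\le 2$, so $\BU(F)=2$.

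The only subtlety I anticipate is making sure the bijective substitution $u = F^{-1}(F(x)+b)$ is set up cleanly and that the degenerate case $u=x$ is properly ruled out using $b\neq 0$; beyond that, the argument is a short computation driven by the APN property, and no structural properties of $\GF n$ beyond characteristic $2$ are needed.
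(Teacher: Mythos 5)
Your proof is correct. Note that the paper itself states this result without proof, citing \cite{BOOMERANG-2018}, so there is no in-paper argument to compare against; your substitution $u=F^{-1}(F(x)+b)$ leading to $D_aF(u)=D_aF(x)$ is essentially the original argument of Cid et al., and both directions are handled soundly (the forward direction correctly combines Theorem~\ref{lower_bound} with the parity of DDT entries, and in the reverse direction the two APN solutions $u=x$, $u=x+a$ are correctly identified and the case $u=x$ is excluded by $b\neq 0$). One remark: the paper's own machinery gives an even quicker route to the reverse direction. By Theorem~\ref{boom-new-form}, $\BCT_F(a,b)=\sum_{\gamma\neq 0}\#\left(\mathcal U^F_{\gamma,b}\cap(a+\mathcal U^F_{\gamma,b})\right)$; if $F$ is APN then each nonempty $\mathcal U^F_{\gamma,b}$ is a pair $\{x_0,x_0+\gamma\}$, whose translate by $a\neq 0$ meets it only when $\gamma=a$, so the sum collapses to $\DDT_F(a,b)\le 2$. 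This avoids the explicit inverse-function bookkeeping, but your version has the advantage of being self-contained and not relying on the reformulation.
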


In \cite{Boura-Canteaut-2018}, Boura and Canteaut established an alternative formulation of the boomerang uniformity as follows:

\begin{theorem}(\cite{Boura-Canteaut-2018})\label{boura-canteaut}
  Let $F$ be a permutation of $\GF n$. Then, for any $a$ and $b$ in $\GF n^\star$
  \begin{displaymath}
    \BCT_F(a,b) = \DDT_F(a,b) + \sum_{\gamma\in\GF n^\star,\gamma\not=b}\#\left(\mathcal U_{\gamma,a}^{F^{-1}}\cap\left(b + U_{\gamma,a}^{F^{-1}}\right)\right)
  \end{displaymath}
  where
  \begin{displaymath}
    \mathcal U^{F^{-1}}_{\gamma,a} = \{x\in\GF n\mid D_{\gamma}F^{-1}(x) = a\}.
  \end{displaymath}
\end{theorem}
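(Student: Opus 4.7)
The plan is to partition the set of $x$ counted by $\BCT_F(a,b)$ according to the auxiliary value $\gamma:=D_aF(x)=F(x)+F(x+a)$, and to translate each of the two conditions on $x$ into a condition involving the sets $\mathcal U^{F^{-1}}_{\gamma,a}$. Since $F$ is a permutation and $a\neq 0$, one has $\gamma\neq 0$, so the partition runs over $\gamma\in\GF n^\star$.

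The key identity, which bridges derivatives of $F$ and of $F^{-1}$, is the chain of equivalences
\[
D_aF(x)=\gamma \iff F(x+a)=F(x)+\gamma \iff x+a=F^{-1}(F(x)+\gamma) \iff D_\gamma F^{-1}(F(x))=a,
\]
showing that $D_aF(x)=\gamma$ is equivalent to $F(x)\in\mathcal U^{F^{-1}}_{\gamma,a}$. Likewise, writing $y=F(x)$, the BCT defining equation $F^{-1}(F(x)+b)+F^{-1}(F(x+a)+b)=a$ becomes $F^{-1}(y+b)+F^{-1}(y+b+\gamma)=a$, i.e.\ $y+b\in\mathcal U^{F^{-1}}_{\gamma,a}$. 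Since $x\mapsto F(x)$ is a bijection on $\GF n$, the $\gamma$-slice of the count equals
\[
\#\bigl\{x\in\GF n: D_aF(x)=\gamma \text{ and }F(x)+b\in\mathcal U^{F^{-1}}_{\gamma,a}\bigr\}=\#\bigl(\mathcal U^{F^{-1}}_{\gamma,a}\cap(b+\mathcal U^{F^{-1}}_{\gamma,a})\bigr).
\]

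It remains to isolate the contribution from $\gamma=b$. In that case the condition $y+b\in\mathcal U^{F^{-1}}_{b,a}$ unfolds to $F^{-1}(F(x)+b)+x=a$, which is equivalent to $D_aF(x)=b$; hence the two constraints coincide and the $\gamma=b$ slice counts exactly $\#\mathcal U^{F^{-1}}_{b,a}=\DDT_{F^{-1}}(b,a)=\DDT_F(a,b)$ (where the last equality uses the symmetry recalled in \eqref{2:boom}'s DDT analogue). Extracting this term from the sum over $\gamma\in\GF n^\star$ yields the stated identity.

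I foresee no serious obstacle beyond care with the $\gamma=b$ case, where the two seemingly distinct conditions on $x$ collapse to a single one, reflecting that $\mathcal U^{F^{-1}}_{b,a}$ is closed under translation by $b$ (so that the intersection $\mathcal U^{F^{-1}}_{b,a}\cap(b+\mathcal U^{F^{-1}}_{b,a})$ equals $\mathcal U^{F^{-1}}_{b,a}$ itself); handling it cleanly is precisely what motivates writing $\DDT_F(a,b)$ as a separate term in the formula.
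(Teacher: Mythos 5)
Your argument is correct. Note first that the paper itself does not prove this statement: it is quoted from \cite{Boura-Canteaut-2018} and used as a black box, so there is no in-paper proof to match line by line. Your derivation --- partitioning the solutions $x$ of the BCT equation by the value $\gamma = D_aF(x)\in\GF n^\star$, observing via the bijection $y=F(x)$ that the slice condition becomes $y\in\mathcal U^{F^{-1}}_{\gamma,a}$ and $y+b\in\mathcal U^{F^{-1}}_{\gamma,a}$, and then peeling off the $\gamma=b$ term using the translation-invariance $b+\mathcal U^{F^{-1}}_{b,a}=\mathcal U^{F^{-1}}_{b,a}$ so that it contributes $\DDT_{F^{-1}}(b,a)=\DDT_F(a,b)$ --- is sound at every step and is essentially the original Boura--Canteaut derivation. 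The closest thing the paper does prove is the sibling Theorem~\ref{boom-new-form}, and there it takes a genuinely different route: it invokes the characterization from \cite{Li-etal2019} of $\BCT_F(a,b)$ as the number of pairs $(x,y)$ with $F(x)+F(y)=b$ and $F(x+a)+F(y+a)=b$, and substitutes $y=x+\gamma$. That route has the advantage of avoiding $F^{-1}$ entirely and of making sense even when $F$ is not a permutation, whereas your route works directly from the definition of the BCT (so it needs no external characterization) but is intrinsically tied to $F$ being invertible. Either argument, combined with the identity $\BCT_F(a,b)=\BCT_{F^{-1}}(b,a)$ of (\ref{2:boom}), yields the other formulation, as the paper itself remarks.
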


In this paper, we shall use a slightly different formulation of Theorem~\ref{boura-canteaut}. 

\begin{theorem} \label{boom-new-form}
  Let $F$ be a permutation of $\GF{n}$. Then, for any $a$ and $b$ in $\GF n^\star$
  \begin{eqnarray}\label{boom-eq1}
    \BCT_F(a,b) = \sum_{\gamma\in\GF n^\star}\#\left(\mathcal U_{\gamma,b}^{F}\cap\left(a + \U_{\gamma,b}^{F}\right)\right),
  \end{eqnarray}
  where
  \begin{displaymath}
    \mathcal U^{F}_{\gamma,b} = \{x\in\GF n\mid D_{\gamma}F(x) = b\}.
  \end{displaymath}
\end{theorem}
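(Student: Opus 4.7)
The plan is to give a direct bijective argument between the set $S := \{x \in \GF n \mid F^{-1}(F(x)+b) + F^{-1}(F(x+a)+b) = a\}$, whose size is $\BCT_F(a,b)$, and the set of ordered pairs $\{(x,\gamma) : \gamma \in \GF n^\star,\ x \in \mathcal U^F_{\gamma,b}\cap(a+\mathcal U^F_{\gamma,b})\}$, whose size is the right-hand side of \eqref{boom-eq1}. The key observation is that to each $x\in S$ one can attach a canonical ``switching difference'' $\gamma=\gamma(x)$, and that this $\gamma$ is forced by $x$.

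I would first define, for $x\in S$, $u := F^{-1}(F(x)+b)$ and $\gamma := u + x$. Because $F(u) = F(x)+b$ with $b\neq 0$ and $F$ is a permutation, one has $u\neq x$, so $\gamma\in\GF n^\star$. The identity $F(u) = F(x)+b$ rewrites as $F(x+\gamma)+F(x)=b$, placing $x$ in $\mathcal U^F_{\gamma,b}$; the BCT condition $F^{-1}(F(x+a)+b) = u+a = (x+a)+\gamma$ similarly rewrites as $F((x+a)+\gamma)+F(x+a)=b$, placing $x+a$ in $\mathcal U^F_{\gamma,b}$. This supplies the forward map $x\mapsto(x,\gamma(x))$.

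For the reverse direction, given $\gamma\in\GF n^\star$ and $x\in\mathcal U^F_{\gamma,b}\cap(a+\mathcal U^F_{\gamma,b})$, the relations $F(x+\gamma)=F(x)+b$ and $F(x+a+\gamma)=F(x+a)+b$ invert, using that $F$ is a permutation, to $F^{-1}(F(x)+b)=x+\gamma$ and $F^{-1}(F(x+a)+b)=x+a+\gamma$, whose sum equals $a$; hence $x\in S$. The first identity also forces $\gamma = F^{-1}(F(x)+b)+x$, so the pair $(x,\gamma)$ is uniquely reconstructed from $x$, giving the inverse of the forward map. Summing $\#\bigl(\mathcal U^F_{\gamma,b}\cap(a+\mathcal U^F_{\gamma,b})\bigr)$ over $\gamma\in\GF n^\star$ then counts the pairs, hence $\#S$, which is \eqref{boom-eq1}.

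The only delicate step, and the one I expect to be the main obstacle, is guaranteeing uniqueness of $\gamma$ attached to a given $x\in S$: without it, a single $x$ could be counted for several values of $\gamma$ on the right, and the formula would fail by over-counting. The injectivity of $F$, used already at the very first rewrite $F(x+\gamma)=F(x)+b\Rightarrow \gamma=F^{-1}(F(x)+b)+x$, is exactly what prevents this. As a sanity check of the plan, one can compare with Theorem~\ref{boura-canteaut} applied to $F^{-1}$ at $(b,a)$ combined with the symmetry \eqref{2:boom}: the $\gamma=a$ summand then reproduces $\DDT_F(a,b)$ because $\mathcal U^F_{a,b}$ is invariant under translation by $a$, recovering \eqref{boom-eq1}.
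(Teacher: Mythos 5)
Your argument is correct and follows essentially the same route as the paper: both parametrize the count by the difference $\gamma = F^{-1}(F(x)+b)+x$ and reduce $\BCT_F(a,b)$ to counting pairs $(x,\gamma)$ with $\gamma\in\GF n^\star$ solving $F(x)+F(x+\gamma)=b$ and $F(x+a)+F(x+a+\gamma)=b$ simultaneously, with the injectivity of $F$ guaranteeing that each $x$ contributes to at most one $\gamma$. The only difference is that the paper imports the pair-counting characterization of $\BCT_F(a,b)$ from \cite{Li-etal2019} and then substitutes $y=x+\gamma$, whereas you derive that characterization directly from the definition of the BCT, making your version self-contained.
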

\begin{proof} Observing that $\BCT_F(a,b)=\BCT_{F^{-1}}(b,a)$, we can prove Equation (\ref{boom-eq1}) by directly applying Theorem \ref{boura-canteaut}. Here we mention another method. It was proved in \cite{Li-etal2019} that the quantity $\BCT_F(a,b)$ equals the number of solutions $(x,y) \in (\GF{n})^2$ satisfying the equations $F(x)+F(y)=b$ and $F(x+a)+F(x+a)=b$ simultaneously. Letting $x+y=\gamma$, then $y=x+\gamma$, so we have
\begin{eqnarray*}
\BCT_F(a,b) &=& \#\left\{ (x,\gamma):
\begin{array}{c}
F(x)+F(x+\gamma)=b \\
F(x+a)+F(x+\gamma+a)=b \end{array} \right\}\\
&=& \sum_{\gamma} \#\left\{
x: \begin{array}{c}
F(x)+F(x+\gamma)=b \\
F(x+a)+F(x+\gamma+a)=b
\end{array}\right\}.
\end{eqnarray*}
It is easy to see that the set in the inner sum for each subscript $\gamma$ is $\U_{\gamma,b}^{F}\cap\left(a + \U_{\gamma,b}^{F}\right)$. Moreover, $\U_{0,b}^{F} = \emptyset$ since $F$ is a permutation. This completes the proof of Theorem \ref{boom-new-form}.
\qed
\end{proof}

The new formulation (\ref{boom-eq1}) seems more convenient to use than Theorem \ref{boura-canteaut} which involves $F^{-1}$. Moreover, in (\ref{boom-eq1}), the condition that $F$ is a permutation is not required, that is, we may define the boomerang uniformity for any $(n,n)$ function $F$, even though it may not be a permutation. This is similar to the concept of differential uniformity, which may be of future interest. Finally, in (\ref{boom-eq1}), since $\U_{a,b}^{F} =a + \U_{a,b}^{F}$, we have
\[\BCT_F(a,b) = \DDT_F(a,b)+ \sum_{\gamma \ne a,0}\#\left(\mathcal U_{\gamma,b}^{F}\cap\left(a + \U_{\gamma,b}^{F}\right)\right), \]
from which we can immediately derive $\BCT_F(a,b) \ge \DDT_F(a,b)$.

In the following, we  show that the row sum and the column sum of the boomerang connectivity table can be expressed in terms of the zeros of the second-order derivative of the permutation or its inverse.
\begin{proposition}\label{row-and-column-sum}
For any $a$ and $b$ in $\GF n^\star$, we have
\begin{eqnarray} \label{3:bct1}
\sum_{c\in\GF n^\star}\BCT_F(a,c) &=& \sum_{c\in\GF n^\star}\#\{x\in\GF n\mid D_aD_cF(x) = 0\} \\
\label{3:bct2} &=& \sum_{c\in\GF n^\star} \DDT_F(a,c)^2-2^n,
  \end{eqnarray}
  and
  \begin{eqnarray*}
    \sum_{c\in\GF n^\star}\BCT_F(c,b) &=& \sum_{c\in\GF n^\star}\#\{x\in\GF n\mid D_bD_cF^{-1}(x) = 0\}\\
    &=& \sum_{c\in\GF n^\star} \DDT_F(c,b)^2-2^n.
  \end{eqnarray*}
\end{proposition}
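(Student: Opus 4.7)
The plan is to start from the new formulation of the BCT given in Theorem~\ref{boom-new-form}, swap the order of summation, and recognize the inner count as the zero set of a second-order derivative. Concretely, for any $a \in \GF n^\star$,
\[
\sum_{c \in \GF n^\star} \BCT_F(a,c) = \sum_{c \in \GF n^\star} \sum_{\gamma \in \GF n^\star} \#\bigl(\U_{\gamma,c}^F \cap (a + \U_{\gamma,c}^F)\bigr).
\]
The inner cardinality counts $x$ with $D_\gamma F(x) = c$ and $D_\gamma F(x+a) = c$ simultaneously. Summing over $c \in \GF n^\star$ just records, for each $(x,\gamma)$, whether these two values are equal and nonzero. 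Since $\gamma \ne 0$ and $F$ is a permutation, $D_\gamma F(x) \ne 0$ automatically, so the condition reduces to $D_\gamma F(x) = D_\gamma F(x+a)$, i.e.\ $D_a D_\gamma F(x) = 0$. Renaming $\gamma$ as $c$ yields the first equality of (\ref{3:bct1}).

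For the second equality (\ref{3:bct2}), I would set $g := D_a F$ and rewrite
\[
\sum_{c \in \GF n^\star} \#\{x \in \GF n : D_a D_c F(x) = 0\} = \sum_{c \in \GF n^\star} \#\{x \in \GF n : g(x) = g(x+c)\}.
\]
The right-hand side counts ordered pairs $(x,y) \in \GF n \times \GF n$ with $g(x)=g(y)$ and $y \ne x$ (via the bijection $c = x+y$). Partitioning such pairs by the common value $b = g(x)$, the fiber $g^{-1}(b)$ has exactly $\DDT_F(a,b)$ elements, contributing $\DDT_F(a,b)^2 - \DDT_F(a,b)$ pairs. Summing over $b$, and using $\sum_b \DDT_F(a,b) = 2^n$ together with $\DDT_F(a,0)=0$ (since $F$ is a permutation and $a \ne 0$), gives $\sum_{c \in \GF n^\star} \DDT_F(a,c)^2 - 2^n$, as required.

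The column-sum identities follow formally from the row-sum identities applied to $F^{-1}$, together with the two symmetry relations $\BCT_F(c,b) = \BCT_{F^{-1}}(b,c)$ (equation (\ref{2:boom})) and $\DDT_F(c,b) = \DDT_{F^{-1}}(b,c)$ recalled earlier in the excerpt. Indeed,
\[
\sum_{c \in \GF n^\star} \BCT_F(c,b) = \sum_{c \in \GF n^\star} \BCT_{F^{-1}}(b,c),
\]
and the row-sum identities applied to $F^{-1}$ with parameter $b$ produce both $\sum_{c} \#\{x : D_b D_c F^{-1}(x) = 0\}$ and $\sum_c \DDT_{F^{-1}}(b,c)^2 - 2^n = \sum_c \DDT_F(c,b)^2 - 2^n$.

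I do not anticipate a serious obstacle: the only delicate point is keeping track of why the sums can be taken over $\GF n^\star$ rather than all of $\GF n$ without changing their value, which rests entirely on $\DDT_F(a,0)=0$ and $D_\gamma F(x) \ne 0$ for $\gamma \ne 0$, both valid because $F$ is a permutation. The double-counting step that turns $\sum_c \#\{x: g(x)=g(x+c)\}$ into $\sum_b \DDT_F(a,b)^2 - 2^n$ is routine once this has been noted.
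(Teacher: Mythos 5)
Your proposal is correct and follows essentially the same route as the paper: the first identity is obtained exactly as in the paper by swapping the two sums in Theorem~\ref{boom-new-form} and using that $D_\gamma F(x)\neq 0$ for $\gamma\neq 0$, and the second identity is the same double count of ordered pairs $(x,y)$ with $D_aF(x)=D_aF(y)$, differing only in bookkeeping (you exclude the diagonal $x=y$ up front and recover $-2^n$ as $-\sum_b \DDT_F(a,b)$, while the paper keeps $c=0$ in the sum and subtracts $\BCT_F(a,0)=2^n$ at the end). The column-sum case via $\BCT_F(c,b)=\BCT_{F^{-1}}(b,c)$ and $\DDT_F(c,b)=\DDT_{F^{-1}}(b,c)$ is also exactly the paper's argument.
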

\begin{proof}
Let $a$ and $b$ be in $\GF n^\star$.  According to (\ref{boom-eq1})
  \begin{eqnarray*}
    \sum_{c\in\GF n^\star}\BCT_F(a,c)
    &=& \sum_{c\in\GF n^\star}\sum_{\gamma\in\GF n^\star} \#\{x\in\GF n\mid D_\gamma F(x) = D_\gamma F(x+a) = c\}\\
    &=& \sum_{\gamma\in\GF n^\star}  \#\{x\in\GF n\mid D_\gamma F(x) = D_\gamma F(x+a)\}.
  \end{eqnarray*}
Here we have used the fact that $D_{\gamma}F(x) \ne 0$ for any $x \in \GF n$ and $\gamma \in \GF n^\star$ since $F$ is a permutation. Now,
  the identity (\ref{3:bct1}) follows immediately from the fact that
  $D_\gamma F(x) = D_\gamma F(x+a)$ if and only if $D_aD_\gamma F(x) = 0$.

As for (\ref{3:bct2}), first we observe
\begin{eqnarray*}
&&\sum_{c\in\GF n}\BCT_F(a,c)\\
&=& \sum_{c \in \GF n} \#\left\{(x,y) \in (\GF n)^2: \begin{array}{l}
F(x)+F(y)=c \\
F(x+a)+F(y+a)=c\end{array} \right\} \\
&=&
\#\left\{(x,y) \in (\GF n)^2: \begin{array}{l}
F(x)+F(y)=F(x+a)+F(y+a)\end{array} \right\}\\
&=& \sum_{c \in \GF n} \#\left\{(x,y) \in (\GF n)^2: \begin{array}{l}
F(x)+F(x+a)=c \\
F(y)+F(y+a)=c\end{array} \right\}=\sum_{c\in\GF n} \DDT_F(a,c)^2.
\end{eqnarray*}
Noting that $\BCT_F(a,0)=2^n$ and $\DDT_F(a,0)=0$, we obtain the desired identity. The proof of the second assertion is similar, by using $\BCT_F(c,b) = \BCT_{F^{-1}}(b,c)$. This completes the proof of Theorem \ref{row-and-column-sum}.
\qed
\end{proof}

\section{On the boomerang uniformity of quadratic permutations}\label{boomerang_uniformity_quadratic_permutations}

In this section, we specialize our study of boomerang uniformity to quadratic permutations in
even dimension. In even dimension, it is known that the best differential uniformity of a quadratic permutation is $4$. Note that in even dimension, APN quadratic permutations $F$ do not exist \cite{Nyb95}. Therefore, $\Delta(F)=4$ is the lowest differential uniformity that a quadratic permutation can achieve in this case.

In \cite{Boura-Canteaut-2018}, Boura and Canteaut exhibited a family of quadratic
permutations with optimal BCT, i.e. permutations which have differential uniformity
and boomerang uniformity both equal to $4$. In \cite{Li-etal2019} Li et al. provided another family of quadratic permutations with optimal BCT. These two families of permutations are related to the so-called Gold power permutations \cite{Gol68}. In this section we obtain a vast generalization of these results.

\begin{theorem} \label{3:thm1}
Let $q$ be a power of $2$ and $m$ a positive integer. Let $F: \F_{q^m} \to \F_{q^m}$ be a quadratic function of the form
\begin{eqnarray} \label{4:qform}
F(x)=\sum_{0\leq i\leq j\leq m-1}c_{ij}x^{q^i+q^j}, \quad  \forall c_{ij} \in \mathbb{F}_{q^m}.
\end{eqnarray}
Then $\DU(F) \ge q$. Moreover, if $F$ is a permutation on $\F_{q^m}$ and $\DU(F)=q$, then $\beta(F)=q$.
\end{theorem}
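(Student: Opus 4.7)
The plan is to exploit the fact that the exponents in~(\ref{4:qform}) all have the shape $q^i + q^j$, which forces the first-order derivative $D_a F$ to be $\F_q$-affine in $x$ (not merely $\F_2$-affine). Once this structure is in hand, both $\DDT$ and $\BCT$ values will be controlled by the kernel and image of an explicit $\F_q$-linear map.

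First I would compute, for $a \in \F_{q^m}$,
\begin{displaymath}
D_a F(x) = L_a(x) + F(a), \qquad L_a(x) := \sum_{0 \le i < j \le m-1} c_{ij}\bigl(x^{q^i} a^{q^j} + a^{q^i} x^{q^j}\bigr),
\end{displaymath}
noting that the $i=j$ terms contribute only to $F(a)$. The map $L_a$ is $\F_q$-linear and satisfies $L_a(a) = 0$, so $\F_q \cdot a \subseteq \ker L_a$ and $|\ker L_a| \ge q$ whenever $a \ne 0$. Consequently $\DDT_F(a,b) \in \{0, |\ker L_a|\}$. Picking any nonzero $a$ with $F(a) \ne 0$ (which exists since $F \not\equiv 0$ and $F(0) = 0$), every $x \in \ker L_a$ solves $D_a F(x) = F(a)$, so $\DDT_F(a,F(a)) \ge q$, giving $\Delta(F) \ge q$.

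Now suppose $F$ is a permutation with $\Delta(F) = q$. The previous paragraph forces $|\ker L_a| = q$, hence $\ker L_a = \F_q \cdot a$ and $|\IM L_a| = q^{m-1}$. Moreover, injectivity of $F$ means $D_a F$ never vanishes for $a \ne 0$, so $F(a) \notin \IM L_a$. To invoke Theorem~\ref{boom-new-form}, I observe that $D_a D_\gamma F(x) = L_\gamma(a)$ is independent of $x$, so $\mathcal U^F_{\gamma, b} \cap (a + \mathcal U^F_{\gamma, b})$ equals $\mathcal U^F_{\gamma, b}$ when $L_\gamma(a) = 0$ and is empty otherwise; the former occurs precisely when $\gamma \in \F_q^\star \cdot a$. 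Using $L_{ta} = t L_a$ and $F(ta) = t^2 F(a)$ (both valid for $t \in \F_q$), this collapses to
\begin{displaymath}
\BCT_F(a, b) = \sum_{t \in \F_q^\star} \#\bigl\{x \in \F_{q^m} : L_a(x) = t^{-1} b + t F(a)\bigr\}.
\end{displaymath}

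The crux, I expect, lies in bounding this sum by $q$. Each summand equals $0$ or $q$, and the $t$-th summand equals $q$ exactly when $b \in \IM L_a + t^2 F(a)$. Since $q$ is even, $s = t^2$ ranges bijectively over $\F_q^\star$ as $t$ does, so it suffices to show the cosets $\IM L_a + s F(a)$ are pairwise distinct for $s \in \F_q^\star$. If two coincided, then $(s_1 + s_2) F(a) \in \IM L_a$ with $s_1 + s_2 \in \F_q^\star$ would (by $\F_q$-linearity of $\IM L_a$) force $F(a) \in \IM L_a$, a contradiction. Hence at most one term contributes, $\BCT_F(a, b) \le q$, and combined with $\beta(F) \ge \Delta(F) = q$ from Theorem~\ref{lower_bound} we conclude $\beta(F) = q$.
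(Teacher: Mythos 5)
Your proof is correct and follows essentially the same route as the paper's: the map you call $L_a$ is the paper's $H_a(x)=F(x+a)+F(x)+F(a)$, the containment $\F_q\cdot a\subseteq\ker L_a$ gives $\DU(F)\ge q$, the reduction of $\BCT_F(a,b)$ to $\sum_{t\in\F_q^\star}\DDT_F(ta,b)$ via Theorem~\ref{boom-new-form} is identical, and your coset-distinctness argument (using $\F_q$-linearity of $\IM L_a$ and $F(a)\notin\IM L_a$ for a permutation) is exactly the paper's observation that $(\gamma_1^2-\gamma_2^2)F(a)=(\gamma_1-\gamma_2)^2F(a)\in\IM H_a$ would force $F(a)\in\IM H_a$. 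No gaps; if anything you are slightly more explicit than the paper in noting that establishing $\DU(F)\ge q$ requires a nonzero output value $F(a)\ne 0$.
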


\begin{proof}

For any $\gamma \in \F_{q^m}^\star$, let
\[H_{\gamma}(x)=F(x+\gamma)+F(x)+F(\gamma). \]
We have
  \begin{align*}
    H_{\gamma}(x)=& \sum_{0\leq i\leq j \leq m-1 }c_{ij}\left ( (x+\gamma)^{q^i+q^j}+x^{q^i+q^j}+ {\gamma}^{q^i+q^j} \right ) \\
    =& \sum_{0\leq i < j\leq m-1 }
    c_{ij}(x^{q^i} {\gamma}^{q^j}+x^{q^j}{\gamma}^{q^i}).
  \end{align*}
We apply Theorem \ref{boom-new-form}. The set $\U_{\gamma,b}^F$ is given by
\begin{eqnarray*} \U_{\gamma,b}^F = \left\{x \in \F_{q^m}: H_{\gamma}(x)=b+F(\gamma)\right\}. \end{eqnarray*}
Since $F$ is a quadratic function, if $\U_{\gamma,b}^F \ne \emptyset$, that is, $\DDT_F(\gamma,b)>0$, then $\U_{\gamma,b}^F$ is an affine subspace of $\F_{q^m}$ obtained by a translation of the vector space $K_F(\gamma) $ where
\[K_F(\gamma)=\left\{x \in \F_{q^m}: H_{\gamma}(x)=0 \right\}. \]
It is easy to see that $K_F(\gamma)$ is a vector space over $\F_q$ and $\gamma \cdot \F_q \subset K_F(\gamma)$. Thus we have $\DU(F) \ge \# K_F(\gamma) \ge q$.

Now suppose that $F$ is a permutation and $\DU(F)=q$. Then we have
\[K_F(\gamma)= \gamma \cdot \F_q. \]
Thus
\[ \U_{\gamma,b}^F \cap (a+\U_{\gamma,b}^F) = \left\{\begin{array}{cll}
\emptyset &:& \mbox{ if } \DDT_F(\gamma,b)=0 \mbox{ or } a \notin \gamma \cdot \F_q,\\
\U_{\gamma,b}^F &:& \mbox{ if } \DDT_F(\gamma,b)>0 \mbox{ and } a \in \gamma \cdot \F_q,\end{array}\right.\]
and from Theorem \ref{boom-new-form} we have
\begin{eqnarray} \label{4:gboom} \BCT_F(a,b)=\sum_{\gamma \in \F_q^\star} \DDT_F(a \gamma,b). \end{eqnarray}

Now suppose there exist $\gamma_1 \ne \gamma_2 \in \F_q^\star$ such that
\[\DDT_F(a \gamma_i,b) \ne 0, \quad \forall i=1,2, \]
then the equations
\[H_{a \gamma_i}(x)=b+F(a \gamma_i), \quad x \in \F_{q^m}\]
are solvable for both $i=1,2$. Noting that for $i=1,2$, since $\gamma_i \in \F_q$,
\[H_{a \gamma_i}(x)= \gamma_i H_a(x), \quad F(a \gamma_i)=\gamma_i^2F(a), \]
we have
\begin{eqnarray*} \frac{b}{\gamma_i}+\gamma_i F(a) \in \IM H_a, \quad i=1,2,  \end{eqnarray*}
where $\IM H_a$ is the image of the function $H_a(x)$ on $\F_{q^m}$. It is easy to see that the set $\IM H_a$ is a vector space over $\F_q$. Thus we have
\[b+\gamma_i^2 F(a) \in \IM H_a, \quad i=1,2, \]
and
\[(b+\gamma_1^2F(a))-(b+\gamma_2^2F(a))= (\gamma_1-\gamma_2)^2 F(a) \in \IM H_a,\]
and hence $F(a) \in \IM H_a$. However, this means that the equation
\[F(x+a)+F(x)=0\]
is solvable for $x \in \F_{q^m}$, which is impossible because $F$ is a permutation and $a \ne 0$.

So we have proved that in (\ref{4:gboom}) there is at most one $\gamma \in \F_q^\star$ such that $\DDT_F(a \gamma,b) \ne 0$. Noting that actually
\[\DDT_F(a,b) \in \left\{0,q\right\} \quad \forall (a,b) \in (\F_{q^m}^\star)^2, \]
we obtain $\BCT_F(a,b) \le q$ for any $(a,b) \in (\F_{q^m}^\star)^2$. So we conclude that $\beta(F)=q$. This completes the proof of Theorem \ref{3:thm1}.
\qed
\end{proof}

We remark that a general quadratic function $F(x)$ of the form (\ref{4:qform}) may be written as
\[F(x)=f(x)+\phi(x),\]
where
\[f(x)=\sum_{0\leq i < j\leq m-1}c_{ij}x^{q^i+q^j}, \quad \phi(x)=\sum_{0\leq i\leq m-1}c_{i}x^{2q^i}. \]
Noting that $\phi$ is linear, so $\DU(F)=\DU(f)$.

As applications of Theorem \ref{3:thm1}, we show here how two previous results of quadratic permutations with optimal BCT from \cite{Boura-Canteaut-2018} and \cite{Li-etal2019} can be obtained.
\begin{enumerate}
\item[(1).] Let $n \equiv 2 \pmod{4}$ and let $t$ be an even integer such that $\gcd(t,n)=2$. It was known that  the function $F(x)=x^{2^t+1}$ is a permutation on $\GF n$ and $\DU(F)=4$ (see \cite{BCC-2010}). Theorem  \ref{3:thm1} implies that $\beta(F)=4$. This is \cite[Proposition 8]{Boura-Canteaut-2018}.

\item[(2).] Let $n=2m$ where $m$ is an odd integer. Let $\gamma \in \GF n^{\star}$ be an element such that the order of $\lambda^{2^m-1}$ is 3. Define $F(x)=x^{2^m+2}+\lambda x$. It was known that $F(x)$ is a permutation on $\GF n$ with $\DU(F)=4$ \cite{Ziv}. We may write $F(x)=f(x)^2$ where $f(x)=x^{2^{m-1}+1}+\lambda^{2^{-1}} x^{2^{n-2}+2^{n-2}}$, hence Theorem \ref{3:thm1} implies that $\beta(F)=4$. This is \cite[Theorem 5.3]{Li-etal2019}.
\end{enumerate}

Next we exhibit another family of quadratic permutations with optimal BCT.
To this end, we mention that it has been found in
\cite{BRACKEN2012537} a highly nonlinear $4$-differential uniform
permutation of $\GF n$ with $n=3k$, $k \equiv 2 \pmod{4}$, $3 \nmid k$:
\begin{equation}
  \label{eq:Bracken-Tan-Tan}
  F(x) = \beta x^{2^s+1} + \beta^{2^{k}} x^{2^{-k}+2^{k+s}}
\end{equation}
where $\gcd(n,s)=2$, $3\vert k+s$ and $\beta$ is a primitive element
of $\GF n$. It was proved that $\DU(F)=4$. So from Theorem \ref{3:thm1} we obtain

\begin{corollary}
 Let $F$ be  an $(n,n)$-function (permutation) defined by (\ref{eq:Bracken-Tan-Tan}). Then, $\BU(F)=4$.
\end{corollary}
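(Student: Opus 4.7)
The plan is to reduce the corollary to Theorem \ref{3:thm1} by choosing $q=4$ and $m=n/2=3k/2$, which is a valid integer since $k\equiv 2\pmod{4}$ forces $k$, and hence $n=3k$, to be even. What must be checked is: (i) the function $F$ can be written as an $\mathbb{F}_4$-quadratic form $\sum c_{ij}x^{4^i+4^j}$ over $\mathbb{F}_{4^m}=\mathbb{F}_{2^n}$; (ii) $F$ is a permutation of $\mathbb{F}_{2^n}$; and (iii) $\mathrm{DU}(F)=4$. Conditions (ii) and (iii) are precisely the content of the Bracken--Tan--Tan result recalled just before the statement, so the only substantive check is (i).

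For (i), I would verify the parities of the four exponents appearing in $F$. From $\gcd(n,s)=2$ we get that $s$ is even, say $s=2s'$; from $k\equiv 2\pmod{4}$ we get that $k$ is even, say $k=2k'$, so $k+s$ is even as well; and since $n=3k$, we have $-k\equiv 2k\pmod{n}$, which is also even. Therefore each of the two monomials in $F$ can be rewritten as
\[
x^{2^s+1}=x^{4^{s'}+4^{0}},\qquad x^{2^{-k}+2^{k+s}}=x^{4^{k}+4^{(k+s)/2}},
\]
with exponents reduced modulo $2^n-1$, i.e.\ indices taken modulo $m$. Moreover $\beta^{2^k}=\beta^{4^{k'}}\in\mathbb{F}_{2^n}$, so both coefficients lie in $\mathbb{F}_{4^m}$. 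Hence $F$ belongs exactly to the family $(\ref{4:qform})$ with $q=4$.

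Once (i)--(iii) are in place, Theorem \ref{3:thm1} yields $\beta(F)=q=4$, which is the claimed $\mathrm{BU}(F)=4$. The only real work is the parity bookkeeping that recognizes $F$ as an $\mathbb{F}_4$-quadratic form; I do not expect a serious obstacle, because the divisibility conditions $\gcd(n,s)=2$, $k\equiv 2\pmod 4$, and $n=3k$ conveniently force every relevant exponent to be even. The conclusion then drops out of the general theorem without any additional computation involving the BCT itself.
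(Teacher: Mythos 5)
Your proposal is correct and follows the same route as the paper: the corollary is a direct application of Theorem~\ref{3:thm1} with $q=4$ and $m=n/2$, combined with the Bracken--Tan--Tan facts that $F$ is a permutation with $\DU(F)=4$. The parity bookkeeping you carry out (using $2\mid s$, $2\mid k$, and $-k\equiv 2k \pmod n$ to rewrite both monomials as $x^{4^i+4^j}$) is exactly the verification the paper leaves implicit, so you have if anything supplied more detail than the published argument.
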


Finally, we show by numerical computation that many new quadratic permutations with optimal BCT can be found by Theorem \ref{3:thm1}. Here we focus only on the quadratic function $F(x)$ of the form
\begin{eqnarray} \label{4:form} F(x)=x^{2^{s+1}+2}+Ax+Bx^4+C x^{16}, \quad A,B,C \in \GF n,\end{eqnarray}
where $n \equiv 2 \pmod{4}$ and $\gcd(n,s)=2$. It was known that $\DU(F)=4$. Noting that
$F(x)=f(x)^2$ where
\[f(x)=x^{2^{s}+1}+A^{2^{-1}}x^{2^{n-1}}+B^{2^{-1}}x^2+C^{2^{-1}} x^{8}, \quad A,B,C \in \GF n, \]
by Theorem \ref{3:thm1}, if $F$ is a permutation, then $\beta(F)=4$. For simplicity we only consider the case $n=6$ and search via {\bf Magma} triples $(A,B,C) \in (\GF 6)^3$ such that $F$ is a permutation, the number of which is given in the tables below. We also indicate the number of such $F$'s implied by \cite[Proposition 8]{Boura-Canteaut-2018} and \cite[Theorem 5.3]{Li-etal2019}. This really shows that there is a abundance of quadratic permutations with optimal BCT.

\begin{table}[ht]
\begin{center}
\caption{The number of quadratic permutations $F(x)$ of the form (\ref{4:form}) on $\GF 6$ with optimal BCT}\label{table}
\begin{tabular}{|c|c|c|c|} \hline
 & Theorem \ref{3:thm1} (this paper) & \cite[Proposition 8]{Boura-Canteaut-2018} &  \cite[Theorem 5.3]{Li-etal2019}  \\ \hline
$s=2$ &  960  & 1 & 15 \\ \hline
\end{tabular}
\end{center}
\end{table}


{\bf Acknowledgement.} The authors thank Cunsheng Ding, Nian Li and Haode Yan for their interesting discussions.

\end{document}